\theoremstyle{plain} 
\newtheorem{theorem}{Theorem}[section]
\newtheorem{lemma}[theorem]{Lemma}
\newtheorem{proposition}[theorem]{Proposition}
\newtheorem{reduction}{Reduction}
\def\boxit#1{\vbox{\hrule\hbox{\vrule\kern4pt
      \vbox{\kern1pt#1\kern1pt} \kern2pt\vrule}\hrule}}
\newcommand{\opt}[1]{\ensuremath{{\varphi(#1)}}}
\newcommand{\ptas}{polynomial time approximation scheme}
\title{Minimum Fill-In: Inapproximability and \\Almost Tight Lower
  Bounds\thanks{Supported in part by the Hong Kong Research Grants
    Council (RGC) under grant 252026/15E and the National
    Natural Science Foundation of China (NSFC) under grants 61572414
    and 61420106009.}}
\author{Yixin Cao\thanks{Department of Computing, Hong Kong
    Polytechnic University, Hong Kong, China.
    \href{mailto:yixin.cao@polyu.edu.hk}{\tt yixin.cao@polyu.edu.hk}.}
  \and R. B. Sandeep\thanks{Department of Computer Science \&
    Engineering, Indian Institute of Technology Hyderabad, India.
    \href{mailto:cs12p0001@iith.ac.in}{\tt cs12p0001@iith.ac.in}.  The
    bulk of this work was done while visiting Hong Kong
    Polytechnic University.}  }
\date{}
\begin{document}
\maketitle
\begin{abstract}
  Given an $n \times n$ sparse symmetric matrix with $m$ nonzero
  entries, performing Gaussian elimination may turn some zeroes into
  nonzero values.  To maintain the matrix sparse, we would like to
  minimize the number $k$ of these changes, hence called the minimum
  fill-in problem.
  Agrawal, Klein, and Ravi [FOCS 1990;
  \href{http://dx.doi.org/10.1007/978-1-4613-8369-7_2} {Graph Theory
    \& Sparse Matrix Comp.\ 1993}] developed the first approximation
  algorithm for the problem, based on early work on heuristics by
  George [\href{http://dx.doi.org/10.1137/0710032} {{SIAM J. Numer.\
      Anal.\ 10(1973)}}] and by Lipton, Rose, and Tarjan
  [\href{http://dx.doi.org/10.1137/0716027} {SIAM J. Numer.\ Anal.\
    16(1979)}].  The objective function they used is $m + k$, the
  number of nonzero elements in the matrix after elimination.  An
  approximation algorithm using $k$ as the objective function was
  presented by Natanzon, Shamir, and Sharan [STOC 1998;
  \href{http://dx.doi.org/10.1137/S0097539798336073}{SIAM J. Comput.\
    30(2000)}].  These two versions are incomparable to each other in
  terms of approximation.

  Parameterized algorithms for the problem was first studied by
  Kaplan, Shamir, and Tarjan [FOCS 1994;
  \href{http://dx.doi.org/10.1137/S0097539796303044}{SIAM J. Comput.\
    28(1999)}].  Fomin and Villanger [SODA 2012; \href
  {http://dx.doi.org/10.1137/11085390X}{SIAM J. Comput.\ 42(2013)}]
  recently gave an algorithm running in time $2^{O(\sqrt{k} \log k)} +
  n^{O(1)}$.

  Hardness results of this problem are surprisingly scarce, and the
  few known ones either are weak or have to use nonstandard complexity
  conjectures.  The only inapproximability result by Wu et
  al.~[IJCAI'15; \href{http://dx.doi.org/10.1613/jair.4030}{J.\
    Artif.\ Intell.\ Res.\ 49(2014)}] applies to only the objective
  function $m + k$, and is grounded on the Small Set Expansion
  Conjecture.  The only nontrivial parameterized lower bounds, by
  Bliznets et
  al.~[\href{http://dx.doi.org/10.1137/1.9781611974331.ch79}{SODA
    2016}], include a very weak one based on the Exponential Time
  Hypothesis (\textsc{eth}), and a strong one based on hardness of
  subexponential-time approximation of the minimum bisection problem
  on regular graphs.  For both versions of the problem, we exclude the
  existence of polynomial time approximation schemes, assuming
  P$\ne$NP, and the existence of $2^{O(n^{1-\delta})}$-time
  approximation schemes for any positive $\delta$, assuming
  \textsc{eth}.  It also implies a $2^{O(k^{1/2 - \delta})} \cdot
  n^{O(1)}$ parameterized lower bound.  Behind these results is a new
  reduction from vertex cover, which might be of its own interest: All
  previous reductions for similar problems are from some kind of
  graph layout problems.
\end{abstract}

\thispagestyle{empty}
\setcounter{page}{0}
\newpage
\section{Introduction}\label{sec:intro}

The minimum fill-in problem arises from the application of Gaussian
elimination to a sparse matrix, where we want to minimize the number
of zero entries that are turned into nonzero values, called the
\emph{fill-in}, during the elimination process.  As usual, here we
ignore the accidental transformation from a nonzero value to zero.
Rose~\cite{rose-72-sparse-matrix} gave a graph-theoretic
interpretation of the minimum fill-in problem on symmetric matrices.
A graph $G$ can be easily extracted from an $n\times n$ symmetric
matrix $M$ as follows: introduce $n$ vertices, each for a row, and an
edge between the $i$th vertex and the $j$th vertex if and only if $i
\ne j$ and $M_{i j} \ne 0$.  The minimum fill-in of $M$ is exactly the
minimum number of edges we need to add to $G$ to make it chordal,
i.e., free of induced cycles of length of at least four.

This correlation turns out to be crucial in the sense that almost all
known algorithmic and hardness results to this problem use the graph
formulation.  One early heuristic approach is to choose a vertex with
the minimum degree and add edges to make its neighborhood a clique
\cite{george-81-sparse-positive-definite}, behind which is the
observation that every chordal graph has a vertex whose neighborhood
form a clique.  Using another fact that every minimal separator of a
chordal graph is a clique, George \cite{george-73-nested-dissection,
  george-81-sparse-positive-definite} proposed the \emph{nested
  dissection} heuristic, which recursively finds a balanced separator
and add edges to make it a clique.  Its performance relies thus on how
good the separators we can find; e.g., combined with the Lipton-Tarjan
planar separator theorem \cite{lipton-79-planar-separator}, it
immediately leads to approximation algorithms for planar and
bounded-genus graphs \cite{lipton-79-generalized-nested-dissection,
  gilbert-87-nested-dissection}.
Using the approximation algorithm of Leighton and
Rao~\cite{leighton-99-multicommodity-in-approximation} for finding
balanced separators, Agrawal et al.~\cite{agrawal-93-minimum-fill-in}
developed the first algorithm with a nontrivial performance guarantee
for the minimum fill-in problem on general
graphs.

The minimum fill-in problem, minimizing the number of added edges (the
number of zeroes turned into nonzero values), can also be formulated
as minimizing the number of edges in the resulting chordal supergraph
(the number of nonzero elements in the matrix after elimination).
When discussing approximation algorithms, we need to specify the
objective functions: They may behave quite different from the point of
view of approximation.  For convenience, we use the name \emph{minimum
  fill-in} when the objective function is the size of the fill-in, and
\emph{chordal completion} otherwise.

Let $G$ be a graph on $n$ vertices and $m$ edges, and let $\opt{G}$
denote the size of minimum fill-ins of $G$.  The algorithm of Agrawal
et al.~\cite{agrawal-93-minimum-fill-in, agrawal-91-thesis} always
produces a chordal supergraph of at most $O\big((m +
\opt{G})^{0.75}\sqrt{m}\log^{3.5}n \big)$ edges, thereby having a
ratio $O(\sqrt[4]{m}\log^{3.5}n) = O(\sqrt{n}\log^{3.5}n)$ for the
chordal completion problem.  The first (and only) approximation
algorithm for minimum fill-in was reported by Natanzon et
al.~\cite{natanzon-00-approximate-fill-in}, which has a ratio $8
\opt{G}$, i.e., it always finds a fill-in of size at most $8
\varphi^2(G)$.  We remark that these two results are incomparable in
general.
They also provided algorithms with better approximation ratios on
graphs of degrees at most $d$, $O( \sqrt{d}\log^4n)$ for chordal
completion \cite{agrawal-93-minimum-fill-in} and $O(d^{2.5}\log^4(n
d))$ \cite{natanzon-00-approximate-fill-in} for minimum fill-in.

Thus far there are no constant-ratio approximation algorithms known
for either of them.  Even more embarrassing might be the progress on
hardness results.  We could not even exclude \ptas{s} for the minimum
fill-in problem.  The only known inapproximability result for chordal
completion was given by Wu et
al.~\cite{wu-14-inapproximability--treewidth}, who excluded
constant-ratio approximation on the assumption of the Small Set
Expansion Conjecture, which is related to the Unique Games Conjecture
but less established \cite{raghavendra-10-graph-expansion-and-ugc}.
We give the first inapproximability result for both problems on the
assumption P $\ne$ NP.
\begin{theorem}\label{thm:no-ptas}
  If either of the minimum fill-in and the chordal completion problems
  has a \ptas{}, then P $=$ NP.
\end{theorem}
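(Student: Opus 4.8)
The plan is to give a single polynomial-time reduction from \textsc{Vertex Cover}---which admits no \ptas{} unless P $=$ NP, even on bounded-degree graphs---to both problems at once, and to make it approximation-preserving. From a graph $G$ on $n$ vertices (of bounded degree, so that $\tau(G)=\Theta(n)$, where $\tau(G)$ is its minimum vertex-cover size) I would build a graph $H$ on $\Theta(n)$ vertices such that $\opt{H}=g(\tau(G))$ for a fixed strictly increasing function $g$ with $g(t)=\binom{t}{2}+O(t)$, and such that from any chordal supergraph $H+F$ one can extract in polynomial time a vertex cover $C$ of $G$ with $|F|\ge g(|C|)$. Since $g$ is quadratic, a factor-$(1+\epsilon)$ error in $|F|$ becomes essentially a factor-$(1+\epsilon/2)$ error in $|C|$, so a $(1+\epsilon)$-approximate fill-in gives a $(1+O(\epsilon))$-approximate vertex cover; and because $m(H)=O(n^2)=O(\opt{H})$, the same holds starting from a $(1+\epsilon)$-approximate \emph{chordal completion} (objective $m(H)+\opt{H}$). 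The very same reduction, run on the \textsc{eth}-hard instances of \textsc{Vertex Cover}, also preserves a gap, and since $k=\opt{H}=\Theta(n^2)$ it yields the subexponential-time and $2^{O(k^{1/2-\delta})}\cdot n^{O(1)}$ lower bounds stated in the introduction; but Theorem~\ref{thm:no-ptas} needs only approximation-preservation.

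The heart of the construction is a gadget realizing ``$C$ is a vertex cover of $G$'' $\iff$ ``turning the copy of $C$ into a clique destroys every hole of $H$''. I would place a copy of $V(G)$ as an \emph{independent} set of $H$, so that making any $t$-subset a clique costs exactly $\binom{t}{2}$, matching the leading term of $g$; attach to each edge $uv$ of $G$ a small obstruction through the copies of $u$ and $v$; and link all these obstructions through a common ``frame'' of $O(n)$ auxiliary vertices. The frame must be built so that no single obstruction can be resolved locally and cheaply: a hole of $H$ survives unless, for \emph{every} edge $uv$, at least one of $u,v$ has been ``saturated'', and saturating a vertex forces edge additions whose only globally economical realization is to make some vertex cover containing that vertex into a clique---this is what rules out the cheap completions that simply add all edges of $G$ or that only half-saturate many vertices.

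Correctness has two directions. The easy one, $\opt{H}\le g(\tau(G))$: take a minimum vertex cover $C$, let $F$ turn the copy of $C$ together with a controlled part of the frame into a clique, and verify that $H+F$ is chordal by exhibiting a perfect elimination ordering, with $|F|=g(|C|)$. The hard direction---and the real obstacle---is the converse estimate together with the extraction: one is handed an \emph{arbitrary} (near-)minimum chordal completion $H+F$, which need not make any clean set into a clique, and must nevertheless produce a vertex cover $C$ of $G$ with $|F|\ge g(|C|)$. Here I would use the structure theory of chordal graphs---minimal separators are cliques, there is a clique tree---to show that the frame forces $H+F$ to contain a large clique whose trace on $V(G)$ is a vertex cover $C$ of $G$, and then lower-bound $|F|$ by the number of non-edges of $H$ inside that clique. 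The delicate point is quantitative tightness: the bound must come close enough to $g(|C|)$ that the error inflation in the extraction is only $1+O(\epsilon)$, which is precisely why the per-edge obstructions have to be tightly coupled through the frame rather than independent, and where I expect the bulk of the technical work.

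Putting it together: a \ptas{} for either minimum fill-in or chordal completion, applied to $H$ and followed by the extraction, would yield for every $\epsilon>0$ a polynomial-time $(1+O(\epsilon))$-approximation for \textsc{Vertex Cover}, i.e.\ a \ptas{} for \textsc{Vertex Cover}; since that forces P $=$ NP, Theorem~\ref{thm:no-ptas} follows.
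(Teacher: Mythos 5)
There is a genuine gap: the heart of your reduction---the ``frame'' of $O(n)$ auxiliary vertices and, above all, the hard direction of its correctness---is never constructed or proved; you explicitly defer it as ``the bulk of the technical work.'' Since the entire content of the theorem is precisely that hard direction (extracting, from an \emph{arbitrary} near-optimal chordal completion, a vertex cover $C$ with $|F|\ge g(|C|)$), the proposal as written is a specification of what a reduction would need to do rather than a proof. Worse, the quantitative design you chose makes that deferred step genuinely problematic. Because you place the copy of $V(G)$ as an independent set and let the dominant cost be the quadratic term $\binom{|C|}{2}$, you must force a near-optimal completion to concentrate an entire vertex cover into a \emph{single} clique. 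But $\binom{\cdot}{2}$ is convex: a completion that resolves the per-edge obstructions in several different branches of the clique tree, using local covers $C_1,\dots,C_r$, pays only about $\sum_i\binom{|C_i|}{2}$, which can be far smaller than $\binom{\sum_i|C_i|}{2}$. Coupling $\Theta(n)$ obstructions tightly enough through only $O(n)$ frame vertices to exclude all such split solutions (and the even cheaper one that just adds the $O(n)$ edges of $G$ itself) is exactly the difficulty that is not addressed, and it is not a routine verification.

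The paper avoids this entirely by inverting which term dominates. In Reduction~\ref{reduction:colored} the original graph keeps its edges, and each vertex $v$ is non-adjacent to a dedicated block of $bn$ vertices inside a huge clique $U$. Then (i) if an edge $uv$ has neither endpoint ``full,'' a $C_4$ survives, so the full vertices form a vertex cover with no appeal to clique trees or minimal separators (Proposition~\ref{lem:vertex-cover}); and (ii) the cost is \emph{additive over vertices}: any fill-in pays at least $bn$ per full vertex, i.e., $|E_+|\ge |C|\cdot bn$, no matter how the completion is structured, so convexity and solution-splitting are irrelevant. The quadratic term $\binom{\tau(G)}{2}$ appears only in the upper bound and is absorbed as a $(1+\epsilon/2)$ multiplicative error by taking $b=\lceil\epsilon^{-1}\rceil$ (Lemma~\ref{thm:apx-fill-in}); the chordal completion version needs only the extra observation that $|E(H)|=O(b^2d^2n^2)=O(d^3/\epsilon)\cdot\opt{H}$, at the price of demanding ratio $1+\epsilon^2/(10d^3)$ (Lemma~\ref{thm:apx-chordal-completion}). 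If you want to salvage your plan, the shortest route is to abandon the quadratic-dominant design and adopt this linear, per-vertex charging scheme.
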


We actually show a stronger result, which however needs a stronger
complexity assumption, namely the Exponential Time Hypothesis, which
states that the satisfiability problem with at most 3 variables per
clause (\textsc{3sat}) cannot be solved in $2^{o(m + n)}$ time, where
$m$ and $n$ denote the number of clauses and variables in the Boolean
formula~\cite{impagliazzo-01-eth, impagliazzo-01-seth}.  The
Exponential Time Hypothesis (\textsc{eth}) is the standard working
hypothesis of fine-grained complexity, which aims to understand the
exact time complexity of problems and to prove lower
bounds.\footnote{For the reader unfamiliar with this line of research,
  we refer to \cite{lokshtanov-11-lower-bound-on-ETH} and
  ~\cite[Chapter~14]{cygan-15} for reference.}
\begin{theorem}\label{thm:main-2}
  Assuming \textsc{eth}, there is some positive $\epsilon$ such that
  no algorithm can find a ($1 + \epsilon$) approximation for the
  minimum fill-in problem or the chordal completion problem in time
  $2^{O(n^{1 - \delta})}$, for any positive constant $\delta$.
\end{theorem}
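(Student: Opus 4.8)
\textit{Proof plan.}
The plan is to pipe the Exponential Time Hypothesis through the (near‑linear‑size) PCP machinery to obtain a robustly hard \emph{gapped} version of vertex cover, and then to push that constant gap, with only a linear blow‑up in the instance size, into minimum fill‑in and chordal completion by means of our reduction from vertex cover.

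First I would fix the source problem. Starting from \textsc{eth}, the sparsification lemma \cite{impagliazzo-01-seth} lets us assume a \textsc{3sat} instance with $N$ variables and $O(N)$ clauses that still admits no $2^{o(N)}$‑time algorithm. Feeding this into the PCP theorem with quasi‑linear size blow‑up and then through the classical \emph{local} gadget reductions to bounded‑occurrence \textsc{max‑3sat} and on to vertex cover on graphs of bounded degree, one obtains the by‑now standard statement (see, e.g., \cite[Chapter~14]{cygan-15}): there are constants $d\in\mathbb{N}$ and $c>1$ such that, assuming \textsc{eth}, no $2^{o(N)}$‑time algorithm can, given an $N$‑vertex graph $G$ of maximum degree $d$ and an integer $t$, distinguish $\tau(G)\le t$ from $\tau(G)\ge c\,t$, where $\tau(G)$ is the vertex cover number. (The total size blow‑up of this chain is $N^{1+o(1)}$, so the lower bound is really $2^{\Omega(N^{1-o(1)})}$, which is all we use.)

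Second comes the heart of the matter: our reduction, which maps such a bounded‑degree graph $G$ to a graph $H$ with $|V(H)|=O(N)$ so that $\opt{H}$ is tightly pinned by $\tau(G)$. The target is a family of constant‑size gadgets, one for every vertex and every edge of $G$, glued so that $\opt{H}=g(G)+\beta\,\tau(G)$ for a constant $\beta>0$ and a quantity $g(G)=\Theta(N)$ computable from $G$: the upper bound $\opt{H}\le g(G)+\beta\,\tau(G)$ comes from the canonical fill‑in that makes every gadget chordal once a vertex cover of $G$ is fixed; the lower bound $\opt{H}\ge g(G)+\beta\,\tau(G)$ must show that the edges of \emph{any} chordal completion, projected onto the vertex gadgets, certify a vertex cover of $G$ — an uncovered edge leaves a long hole through its gadget that has to be destroyed. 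On these instances $\tau(G)=\Theta(N)=\Theta(g(G))$, so the multiplicative gap $c$ in $\tau(G)$ degrades only to a multiplicative gap $1+\epsilon$ in $\opt{H}$, for some constant $\epsilon>0$ depending only on $c,d,\beta$ and the hidden constants in $g$. Since $m(H)=\Theta(N)=\Theta(\opt{H})$ as well, the same gap, with a possibly smaller constant, survives for the objective $m(H)+\opt{H}$, which is exactly what the chordal completion version needs.

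Finally, suppose some algorithm $(1+\epsilon)$‑approximates minimum fill‑in (resp.\ chordal completion) in time $2^{O(n^{1-\delta})}$ for a fixed $\delta>0$, where $n$ is the number of vertices of the input. Running it on $H$ separates the two sides of the gapped vertex cover instance in time $2^{O(|V(H)|^{1-\delta})}=2^{O(N^{1-\delta})}=2^{o(N)}$, contradicting the lower bound of the first step; hence no such algorithm exists, which is Theorem~\ref{thm:main-2}. The one genuinely hard step is the second, specifically the lower‑bound direction: chordal completions are global objects — induced $4$‑cycles and longer holes can stretch across the whole graph — so forcing a cheap fill‑in to be ``canonical'' and to encode a vertex cover requires the gadgets to be rigid enough (essentially near‑cliques with a few carefully placed attachment vertices) while keeping their size, and hence $|V(H)|$, linear in $N$; trading rigidity against size, and nailing the constants in $g(G)$ and $\beta$ so that the gap does not wash out, is where the real work lies.
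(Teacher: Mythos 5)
Your overall architecture---gap vertex cover on bounded-degree graphs from \textsc{eth} via sparsification and near-linear PCPs, followed by a linear-size reduction into fill-in---is exactly the paper's (its Lemma on bounded-degree vertex cover is quoted from Marx and Bonnet et al., with precisely the provenance you sketch). The genuine gap is in your second step: the reduction itself, which is the entire content of the theorem, is never constructed. You posit a family of constant-size vertex and edge gadgets satisfying an exact identity $\opt{H}=g(G)+\beta\,\tau(G)$ with $\beta$ a constant, and you yourself flag that the lower-bound direction ``is where the real work lies'' because holes in a chordal completion are global objects. That difficulty is real, and the design you propose runs straight into it; nothing in your writeup overcomes it, so the proof does not go through as written.

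The paper's reduction is structurally different and is built precisely to avoid that obstacle. It does not use local gadgets at all: it properly $d$-colors $G$, attaches to each color class a set of $bn$ new vertices (adjacent to all differently-colored vertices of $G$), and makes the union $U$ of all $bdn$ new vertices a clique. The only lower-bound argument needed is then a single $C_4$ observation: if neither endpoint of an edge $uv$ is ``full'' towards $U$, a four-cycle $u\,v\,u'\,v'$ survives, so the full vertices form a vertex cover and any fill-in has size at least $bn\,\tau(G)$. Crucially, the paper does not prove (and does not need) an exact formula for $\opt{H}$; it settles for the two-sided bound $bn\,\tau(G)\le\opt{H}<bn\,\tau(G)+\tfrac12\tau^2(G)$, and the quadratic slop $\tfrac12\tau^2(G)\le\tfrac12 n\,\tau(G)$ is forced to be an $O(1/b)=O(\epsilon)$ fraction of the main term by choosing $b=\lceil\epsilon^{-1}\rceil$. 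That ``weight each cover vertex by $\Theta(n)$ and drown the additive error'' trick is the missing idea in your plan; with it, $|V(H)|=O(n)$ still holds while $\opt{H}=\Theta(n^2)$, and a separate, more delicate calculation (ratio $1+\epsilon^2/(10d^3)$ rather than $1+\epsilon/3$) handles the chordal completion objective, since $|E(H)|$ exceeds $\opt{H}$ by a factor $\Theta(d^3/\epsilon)$ rather than the $\Theta(1)$ you assumed.
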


This makes also a significant contribution to an important problem in
parameterized computation, i.e., the parameterized lower bound of
minimum fill-in.  Recall that given a graph $G$, and an integer
parameter $k$, the parameterized minimum fill-in problem asks whether
there is a fill-in of size at most $k$.  (Note that it does not make
much sense to use $m + k$ as the parameter.)  Kaplan et
al.~\cite{kaplan-99-chordal-completion} designed the first
parameterized algorithm for the minimum fill-in problem, which runs in
time $O(16^k k^6 + k^2 mn)$, and proposed an $O(k^3)$-vertex kernel.
Natanzon et al.~\cite{natanzon-00-approximate-fill-in} manged to
improve it to $O(k^2)$, which played a crucial role in their
approximation algorithm mentioned above.
Fomin and Villanger~\cite{fomin-12-subexponential-fill-in} developed a
parameterized algorithm running in time $2^{O(\sqrt{k} \log k)} +
O(k^2 mn)$, thereby placing this problem in the class of very few
problems that admit subexponential-time parameterized algorithms on
general graphs.  

Note that the problem is trivial when $k > n^2/2$; otherwise $k^{1/2}
= O(n)$.  Thus, Theorem~\ref{thm:main-2} immediately implies an almost
tight lower bound on parameterized algorithms for this problem.

\begin{theorem}\label{thm:main-3}
  Assuming \textsc{eth}, there is no algorithm that can solve the
  minimum fill-in problem in time $2^{O(k^{1/2 - \delta})} \cdot
  n^{O(1)}$, for any positive constant $\delta$.
\end{theorem}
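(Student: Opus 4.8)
Theorem~\ref{thm:main-3} is essentially a corollary of Theorem~\ref{thm:main-2}, obtained by exploiting the crude but sufficient bound $\opt{G}\le\binom{n}{2}$ that holds for every graph. The plan is a proof by contradiction: suppose that for some positive constant $\delta$ there is an algorithm $\mathcal{A}$ solving the parameterized minimum fill-in problem in time $2^{O(k^{1/2-\delta})}\cdot n^{O(1)}$. We may assume $\delta<1/2$, since otherwise $k^{1/2-\delta}\le1$ and $\mathcal{A}$ already runs in polynomial time; that would be even stronger than what we need (indeed it would imply P $=$ NP via Theorem~\ref{thm:no-ptas}).

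First I would turn $\mathcal{A}$, which answers the decision question ``is $\opt{G}\le k$?'', into an algorithm that computes $\opt{G}$ exactly. Since adding every missing edge makes any graph complete, hence chordal, we have $\opt{G}\le\binom{n}{2}$, so it suffices to run binary search on $k$ over the range $\{0,1,\dots,\binom{n}{2}\}$, invoking $\mathcal{A}$ a total of $O(\log n)$ times, the most expensive call using parameter at most $\binom{n}{2}$. (Alternatively one invokes the known constructive parameterized algorithms, which actually output a minimum fill-in.)

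Next I would bound the running time. With $k\le\binom{n}{2}=O(n^{2})$ we get $k^{1/2-\delta}=O\!\big(n^{1-2\delta}\big)$, so each call of $\mathcal{A}$ takes time $2^{O(n^{1-2\delta})}\cdot n^{O(1)}$, and the whole procedure runs in time $O(\log n)\cdot 2^{O(n^{1-2\delta})}\cdot n^{O(1)}=2^{O(n^{1-2\delta})}$, where the polynomial and logarithmic factors are absorbed into the exponent because $\delta<1/2$. Setting $\delta':=2\delta>0$, this is an algorithm that computes $\opt{G}$, and in particular returns a $(1+\epsilon)$-approximation for the $\epsilon$ of Theorem~\ref{thm:main-2} (indeed for every $\epsilon>0$), in time $2^{O(n^{1-\delta'})}$. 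This contradicts Theorem~\ref{thm:main-2}, which finishes the proof.

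Since the whole argument is a direct reduction to Theorem~\ref{thm:main-2}, there is no real obstacle here beyond the bookkeeping of exponents and the handling of the degenerate ranges of $k$; all the difficulty of the lower bound is carried by Theorems~\ref{thm:no-ptas} and~\ref{thm:main-2} and the reduction from vertex cover underlying them.
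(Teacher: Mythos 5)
Your proof is correct and takes essentially the same route as the paper, which derives Theorem~\ref{thm:main-3} as an immediate corollary of Theorem~\ref{thm:main-2} by observing that the problem is trivial when $k > n^2/2$ and otherwise $k^{1/2} = O(n)$, so that $k^{1/2-\delta} = O(n^{1-2\delta})$. The paper leaves the decision-to-search conversion and the exponent bookkeeping implicit; your binary search and the substitution $\delta' = 2\delta$ merely spell out those details without changing the argument.
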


Let us put Theorem~\ref{thm:main-3} into context.  Again, compare to
the algorithmic progress, the hardness results of parameterized
algorithms lay far behind.  The only nontrivial lower bounds were
given by Bliznets et al.~\cite{bliznets-16-lower-bounds} very
recently.
To relate their results, however, we need to start from 1970s.  The
complexity of the minimum fill-in problem was among the open problems
of Garey and Johnson \cite{garey-79}, and settled by
Yannakakis~\cite{yannakakis-81-minimum-fill-in} with a simple
reduction from the optimal linear arrangement problem.  His reduction,
however, is not very much helpful for deriving inapproximability and
other hardness results we want for the problem.  For example, there is
no inapproximability result for the optimal linear arrangement problem
on the assumption P $\ne$ NP; to exclude polynomial time approximation
schemes for it, Amb{\"{u}}hl~\cite{ambuhl-11-inapproximability} had to
use the assumption that NP-complete problems cannot be solved in
randomized subexponential time.

On the other hand, to derive lower bounds on exact or parameterized
algorithms, we need to prevent the graph or the parameter of the
reduced instance from increasing too much with the reduction.
Therefore, if we want to reuse the reduction of Yannakakis, we need to
trace the whole sequence of reductions from 3\textsc{sat}, which, if
we spell out, has five steps, namely, max-2\textsc{sat}, maximum cut,
optimal linear arrangement, and chain completion, before eventually
minimum fill-in.  As said, the last two reductions are by
Yannakakis~\cite{yannakakis-81-minimum-fill-in}; while the first three
reductions are due to Garey et
al.~\cite{garey-76-simplified-problems}.  For the prospect of deriving
tight lower bounds on the minimum fill-in problem from Yannakakis'
reduction, the main obstacles lie in step~3, from maximum cut to
optimal linear arrangement, and step~4, from optimal linear
arrangement to chain completion.  In
\cite{garey-76-simplified-problems}, the original version of the
reduction for step~3 blows up an $n$-vertex graph to an $n^4$-vertex
graph.  The selection of $4$ in the exponent turns out to be for the
convenience of presentation, and any constant larger than $3$ would
suffice.  Step~4 then blows up the graph size by another quadratic
factor.  Therefore, as already mentioned in
\cite{fomin-12-subexponential-fill-in, bliznets-16-lower-bounds}
(without explanation), assuming \textsc{eth}, one can only derive
lower bounds of $2^{o(\sqrt[3]{n})}$ and $2^{o(\sqrt[6]{k})}\cdot
n^{O(1)}$ from these reductions in their original form
\cite{garey-76-simplified-problems, yannakakis-81-minimum-fill-in}.

Bliznets et al.~\cite{bliznets-16-lower-bounds} sedulously retraced
the tortuous (and torturous) five-step reduction, and managed to
decrease the size of the reduced instance by a stronger complexity
conjecture and heavier constructions.  First, they managed to improve
step~3 such that it produces a graph of almost linear size, which
allows them to derive improved but still weaker lower bounds of
$2^{O(\sqrt{n}/\log^c n)}$ and $2^{O(\sqrt[4]{k}/\log^c k)}\cdot
n^{O(1)}$ for some constant $c$.  Then, to avoid the quadratic
explosion in step~4, they had to introduce a new conjecture on the
subexponential-time approximation hardness of the minimum bisection
problem on $d$-regular graphs.

Instead of further improving these reductions or finding alternatives,
we start from scratch and devise a completely new reduction, which
turns out to be surprisingly simple.  Our reduction is from the vertex
cover problem, whose NP-hardness is derived directly from \textsc{sat}
\cite{karp-72-reduction}.  The following theorem summarizes a simple
version of our reduction, which serves an alternative, and far
simpler, proof for the NP-hardness of the minimum fill-in problem.
\begin{theorem}\label{thm:np-reduction}
  Given an $n$-vertex graph $G$, we can construct in polynomial time
  another graph $H$ on $n^3 + n$ vertices such that $G$ has a vertex
  cover of size $c$ if and only if $H$ has a fill-in of size at most
  $(c + 1) n^2 - 1$.
\end{theorem}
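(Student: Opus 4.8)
\medskip
\noindent\emph{Proof idea.}
I would build $H$ out of $n$ almost‑disjoint \emph{gadgets} $D_{v}$, one per vertex $v$ of $G$, each on $n^{2}$ vertices, together with $n$ auxiliary \emph{selector} vertices $s_{v}$, so that $|V(H)|=n\cdot n^{2}+n=n^{3}+n$. Each $D_{v}$ is chordal on its own but \emph{brittle}: it is built around a thin rigid structure (say a long induced path terminating in two designated \emph{ports}), so that adding a single edge joining the two ports of $D_{v}$ through any external vertex recreates a chordless cycle of length $\Omega(n^{2})$ inside $D_{v}$, which then needs $\Omega(n^{2})$ fill edges to destroy. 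I join $s_{v}$ to the ports of $D_{v}$, and for every edge $uv\in E(G)$ I add a fixed bundle of edges among the ports of $D_{u}$, the ports of $D_{v}$, $s_{u}$ and $s_{v}$, tuned so that the bundle is harmless as long as at least one of $D_{u},D_{v}$ is triangulated ``richly'', but forces a port‑to‑port shortcut through some $s_{\cdot}$ (hence $\Omega(n^{2})$ extra fill edges, by brittleness) if both are left ``poor''. Every gadget needs a small \emph{base} triangulation, with sizes tuned to sum to exactly $n^{2}-1$ over all $D_{v}$; and \emph{activating} a gadget --- richly triangulating it so that all bundles incident to $v$ become harmless --- costs an additional $n^{2}$. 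Call $v$ \emph{active} in a chordal supergraph $H'$ of $H$ if $H'$ adds at least $\tfrac12 n^{2}$ edges inside $D_{v}$.

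\smallskip
\noindent\emph{From a cover to a fill‑in.}
Given a vertex cover $S$ of $G$ with $|S|=c$, activate exactly the gadgets $\{D_{v}:v\in S\}$; this is consistent because $V(G)\setminus S$ is independent, so no bundle has both of its gadgets inactive. I would then certify chordality of the resulting supergraph by writing down a perfect elimination ordering: interiors of the inactive gadgets first (simplicial after their base triangulation), then interiors of the active gadgets (simplicial once richly triangulated), then the ports, then the selectors, each of which by then sees only a clique. Summing the base costs ($n^{2}-1$ in total) and the $n^{2}$ charged per active gadget gives exactly $(c+1)n^{2}-1$. This direction is pure bookkeeping once the gadget is fixed.

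\smallskip
\noindent\emph{From a fill‑in to a cover.}
This is the crux. Take a chordal supergraph $H'$ of $H$ with at most $(c+1)n^{2}-1$ added edges and, w.l.o.g., minimal under edge inclusion; since $H'$ is chordal, every minimal separator of $H'$ is a clique. Let $S$ be its active set. First, $S$ is a vertex cover of $G$: if $uv\in E(G)$ with $u,v$ both inactive, the bundle together with $s_{u},s_{v}$ leaves a chordless cycle in $H$ that $H'$ must have broken; brittleness forces the corresponding minimal separator of $H'$ to contain $\Omega(n^{2})$ interior vertices of $D_{u}$ or of $D_{v}$, hence $\Omega(n^{2})$ fill edges there, contradicting inactivity. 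Second, $|S|\le c$: for each active $v$ the $\ge\tfrac12 n^{2}$ fill edges inside $D_{v}$ form a pool that is essentially private to $D_{v}$ --- the only place fill edges can be shared between different gadgets is among the ports and the $n$ selectors, which can absorb only $O(n^{2})$ in total --- so, combining this charging with the rigidity of the gadgets, $|S|\cdot n^{2}$ up to a tightly controlled $O(n^{2})$ slack matched against the global base $n^{2}-1$ is at most $(c+1)n^{2}-1$, whence $|S|\le c$.

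\smallskip
\noindent\emph{Main obstacle.}
The difficulty is entirely in the second direction, and specifically in making the accounting \emph{exact} rather than asymptotic: I must rule out ``globally clever'' triangulations that sprinkle fill edges across many gadgets and the selectors so as to undercut the per‑gadget cost, and I must pin the additive term so the threshold is precisely $(c+1)n^{2}-1$. The lever is the rigidity of $D_{v}$: the minimal‑separator‑is‑a‑clique property should force, for every active $v$, a near‑clique on $\Omega(n^{2})$ interior vertices lying strictly inside $D_{v}$, so its fill edges are charged to $v$ alone, while the mere $n$ selectors cannot carry enough fill edges to disturb this charging. Should the gadget resist being made this rigid, the same construction still yields $\varphi(H)=(c+1)\,n^{2}\pm O(n)$, which already suffices for Theorems~\ref{thm:no-ptas}, \ref{thm:main-2} and~\ref{thm:main-3}, but not for the exact statement of Theorem~\ref{thm:np-reduction}.
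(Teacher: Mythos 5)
Your proposal has a genuine gap: the gadget $D_v$ on which everything rests is never actually constructed, and the properties you need from it --- that activation costs exactly $n^2$, that the base costs sum to exactly $n^2-1$, that fill edges inside an active gadget are ``private'' and cannot be shared across gadgets, and that a bundle with two inactive endpoints forces $\Omega(n^2)$ extra fill --- are all asserted rather than proved. You concede as much in your final paragraph. A reduction whose correctness hinges on an unspecified gadget ``resisting'' globally clever triangulations is not a proof, and the crucial direction (fill-in $\Rightarrow$ cover) is exactly where such designs tend to break: nothing in the sketch rules out a triangulation that breaks the long induced cycle of $D_v$ with chords placed so as to serve several incident bundles at once, and the claim that the ports and selectors ``can absorb only $O(n^2)$'' is a counting statement about a graph you have not defined.

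You are also fighting a harder battle than the theorem requires. The threshold $(c+1)n^2-1$ does not demand accounting that is exact up to an additive constant: since vertex-cover sizes are integers and each unit of cover is worth $n^2$, any construction achieving $\tau(G)\,n^2 \le \opt{H} < (\tau(G)+1)\,n^2$ suffices, i.e., a one-sided slack of up to $n^2-1$ comes for free. The paper's construction exploits exactly this and dispenses with per-vertex gadgets entirely: for each $v\in V(G)$ it adds $n^2$ new vertices adjacent to every vertex of $G$ \emph{except} $v$, and makes the union $U$ of all $n^3$ new vertices a clique. If $C$ is a vertex cover, completing $C\cup U$ into a clique yields a split (hence chordal) graph at cost at most $|C|\,n^2+\binom{|C|}{2} < (|C|+1)\,n^2$. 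Conversely, for any fill-in $E_+$ and any edge $uv\in E(G)$, if some $u'\in U$ still misses $u$ and some $v'\in U$ still misses $v$ in $H+E_+$, then $u\,v\,u'\,v'$ is an induced $C_4$; so at least one endpoint of every edge of $G$ must be made adjacent to all of $U$, which costs $n^2$ per such vertex, and these costs are pairwise disjoint, giving $\opt{H}\ge \tau(G)\,n^2$. I would recommend abandoning the gadget design in favor of an argument of this shape.
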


The rest of the paper is organized as follows.
Section~\ref{sec:simple-reduction} presents the reduction summarized
by Theorem~\ref{thm:np-reduction}.  Section~\ref{sec:no-ptas} improves
it to a linear reduction that proves
Theorem~\ref{thm:no-ptas}--\ref{thm:main-3}: It uses instead a graph
of bounded degree, and a gap version of vertex cover.
Section~\ref{sec:remarks} contrasts our reduction to that of
Yannakakis, and proposes some possible remedy to make our reduction
work for the related interval completion problem.

\section{The simple reduction}\label{sec:simple-reduction}

In the rest of the paper, we exclusively use the graph formulation of
the minimum fill-in problem and hence graph-theoretic notion.  More
details on this formulation can be found in
\cite{rose-72-sparse-matrix}.  All graphs discussed in this paper are
undirected and simple.  The vertex set and edge set of a graph $G$ are
denoted by $V(G)$ and $E(G)$ respectively.
For $\ell\ge 4$, we use $C_\ell$ to denote a hole on $\ell$ vertices.
Chordal graphs are precisely $\{C_\ell:\ell\ge 4\}$-free graphs.  A
graph is a \emph{split graph} if its vertices can be partitioned into
a clique and an independent set.  It is known that a graph is a {split
  graph} if and only if it contains no $2 K_2$ (a $4$-vertex graph
with two edges that share no vertex), $C_4$, or $C_5$ as an induced
subgraph.  Since every $C_\ell$ with $\ell \ge 6$ contains a $2 K_2$,
a split graph is necessarily chordal.

A set of vertices is a \emph{vertex cover} of a graph $G$ if every
edge of $G$ has at least one end in this set.  A \emph{fill-in} of a
graph $G$ is a set $E_+$ of non-edges of $G$ such that $G + E_+$
(denoting the graph with vertex set $V(G)$ and edge set $E(G)\cup
E_+$) is chordal.  We use $\tau(G)$ to denote the size of minimum
vertex covers of the graph $G$, and $\opt{G}$ to denote the size of
minimum fill-ins of the graph $G$.  The size of chordal completion $G
+ E_+$ is then $|E(G)| + |E_+|$, and the size of a minimum chordal
completion is then $|E(G)| + \opt{G}$.  From the aspect of
optimization, there is no difference between minimizing $|E_+|$, i.e.,
the number of added edges, and $|E(G)| + |E_+|$, i.e., the number of
edges in the resulting graph.  They behave, however, quite different
with respect to approximation; consider, for example, a dense graph
($|E(G)| = \Omega(n^2)$) .  In approximation algorithms for minimum
fill-in, the ratio is defined to be $|E_+|/\opt{G}$, while for chordal
completion it is $( |E(G)| + |E_+| )/ ( |E(G)| + \opt{G})$.

We give here the reduction announced in
Theorem~\ref{thm:np-reduction}.

\begin{reduction}\label{reduction:premitive}
  Let $G$ be a graph on $n$ vertices.  For each vertex $v$ of $G$,
  introduce a set of $n^2$ new vertices, and add edges to make them
  adjacent to all other vertices of $G$ but $v$ itself.  Let $U$
  denote the set of $n^3$ new vertices.  Add all possible edges to
  connect $U$ into a clique.
\end{reduction}

\begin{figure}[t!]
  \centering
    \includegraphics{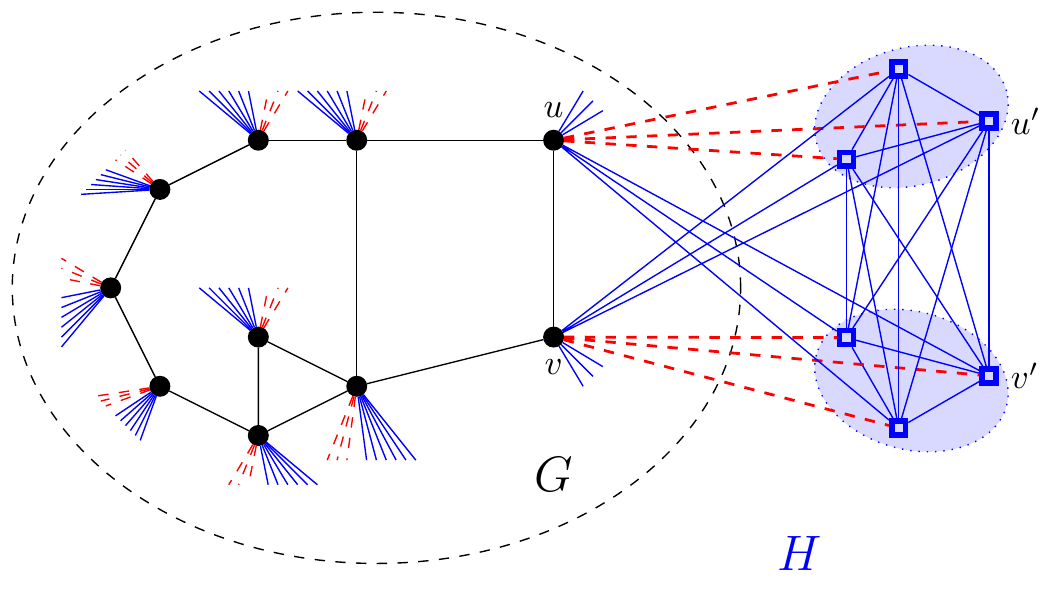}   
\caption{Illustration for Reduction~\ref{reduction:premitive}.  The
  original graph $G$, with black vertices and black edges, is inside
  the dashed ellipse.  Only two sets of the new vertices (blue
  squares) are shown in the right of the figure; they are
  corresponding to $u, v\in V(G)$ respectively.  The $n^2$ vertices
  for $u$ are connected to $v$ by blue edges, and they are nonadjacent
  to $u$, indicated by red dashed lines; likewise for the $n^2$ new
  vertices for $v$.  Edges between them and other vertices of $G$ are
  omitted for clarity.  The whole graph is $H$.}
  \label{fig:np-reduction}
\end{figure}

An example of Reduction~\ref{reduction:premitive} is illustrated in
Figure~\ref{fig:np-reduction}.  Let $H$ be the graph obtained from $G$
by Reduction~\ref{reduction:premitive}.  It has $n + n \cdot n^2 =
O(n^3)$ vertices.  Apart from the edges from $E(G)$, the graph $H$
contains $n^3 (n^3 - 1) / 2$ edges among $U$, and $n \cdot (n^3 -
n^2)$ edges between $V(G)$ and $U$, hence $E(H) = O(n^6)$.
For any given vertex cover $C$ of $G$, the set $V(G)\setminus C$ is an
independent set of $G$; since we do not add edges in $V(G)$ during the
reduction, $V(G)\setminus C$ is also an independent set of $H$.
Therefore, if we add all edges among $C\cup U$ to make it a clique,
then we end with a split graph.  Recall that a split graph is chordal;
we have thus constructed a fill-in of $H$ from a vertex cover of $G$.

We now consider the other direction, i.e., how to extract a vertex
cover of $G$ from a fill-in of $H$.  We say that a vertex $v$ of $G$
is \emph{full} with respect to a fill-in $E_+$ if $E_+$ contains all
the missing edges between $v$ and $U$, and we simply say it is full
when the fill-in $E_+$ is clear from the context.  The following
simple observation is crucial for the whole paper.

\begin{proposition}\label{lem:lower-bound}
  Let $H$ be the graph obtained from $G$ by
  Reduction~\ref{reduction:premitive}.  For any fill-in $E_+$ of $H$,
  the set $C$ of vertices that are full with respect to $E_+$ is a
  vertex cover of $G$.
\end{proposition}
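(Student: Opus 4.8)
The plan is to prove the statement by contradiction: assume that $C$, the set of full vertices with respect to $E_+$, is \emph{not} a vertex cover of $G$, so there is an edge $uv \in E(G)$ with neither $u$ nor $v$ full. I will use this to exhibit an induced $C_4$ in $H + E_+$, contradicting the assumption that $E_+$ is a fill-in of $H$.

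First I would pin down the relevant non-edges of $H$. By the construction in Reduction~\ref{reduction:premitive}, the only neighbours that a vertex $w \in V(G)$ lacks in $U$ are precisely the $n^2$ vertices introduced for $w$; hence $w$ fails to be full exactly when $E_+$ omits at least one of these $n^2$ missing edges. Applying this to $u$ and to $v$, I pick a vertex $x_u$ among the $n^2$ vertices introduced for $u$ with $u x_u \notin E_+$, and a vertex $x_v$ among the $n^2$ vertices introduced for $v$ with $v x_v \notin E_+$. Since the vertex set introduced for $u$ is disjoint from the one introduced for $v$ (as $u \ne v$), the four vertices $u, v, x_u, x_v$ are pairwise distinct.

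The key step is to observe that these four vertices induce a $C_4$ in $H + E_+$. In $H$ we have the edge $uv$ (it is an edge of $G$), the edge $v x_u$ (because $x_u$ is adjacent to every vertex of $G$ other than $u$), the edge $u x_v$ (because $x_v$ is adjacent to every vertex of $G$ other than $v$), and the edge $x_u x_v$ (because $U$ is a clique). The two remaining pairs, $u x_u$ and $v x_v$, are non-edges of $H$ by construction, and they are not in $E_+$ by the choice of $x_u$ and $x_v$; hence they remain non-edges in $H + E_+$. Thus $\{u, v, x_u, x_v\}$ spans in $H + E_+$ exactly the chordless cycle $u\,v\,x_u\,x_v\,u$, an induced $C_4$, contradicting the chordality of $H + E_+$. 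Therefore every edge of $G$ has a full endpoint, and $C$ is a vertex cover of $G$.

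There is no genuine obstacle here: the reduction is tailored precisely so that an uncovered edge of $G$, together with one un-filled ``private'' non-edge at each of its endpoints, forms a four-hole. The only points that require (routine) care are verifying from the definition of Reduction~\ref{reduction:premitive} the four adjacencies listed above, and checking that the four chosen vertices are genuinely distinct so that we obtain an honest $C_4$ rather than a degenerate configuration.
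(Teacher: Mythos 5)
Your proof is correct and follows essentially the same argument as the paper: take an edge $uv$ with neither endpoint full, pick an unfilled non-neighbour of each in $U$, and exhibit the induced $C_4$ on these four vertices in $H + E_+$. Your write-up is just slightly more explicit about why the two chosen vertices of $U$ are distinct and why the unfilled non-neighbours must come from the respective private sets.
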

\begin{proof}
  Let $\widehat H = H + E_+$ and let $u v$ be any edge of $G$.  We
  argue that at least one of $u$ and $v$ is full with respect to
  $E_+$.  Suppose otherwise, then we can find $u', v'\in U$ such that
  $u u'$ and $v v'$ are non-edges of $\widehat H$; see
  Figure~\ref{fig:np-reduction}.  By the construction, $u'$ and $v'$
  are distinct and $u v', u' v, u' v'\in E(H)\subseteq E(\widehat H)$,
  and then $u v u' v'$ is a $C_4$ of $\widehat H$, contradicting that
  $\widehat H$ is chordal.  Therefore, between the two vertices of
  each edge of $G$, at least one of them is full with respect to
  $E_+$, and the proposition follows.
\end{proof}

We are now ready to present our main result of this section, which is
an easy consequence of the aforementioned two-way constructions.

\begin{lemma}\label{thm:simple-reduction}
  Let $G$ be an $n$-vertex graph, and let $H$ be the graph obtained
  from $G$ by Reduction~\ref{reduction:premitive}.  Then $\tau(G) n^2
  \le \opt{H} < (\tau(G) + 1) n^2$.
\end{lemma}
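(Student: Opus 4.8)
The plan is to establish the two inequalities separately, each by a direct counting argument that builds on the two-way constructions already described before the statement; no new idea beyond careful bookkeeping is needed.

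For the upper bound $\opt{H} < (\tau(G)+1)n^2$, I would take a minimum vertex cover $C$ of $G$, so that $|C| = \tau(G)$, and use the fill-in that turns $C\cup U$ into a clique; we have already observed that the resulting graph is a split graph, hence chordal. It then remains to bound the number of edges in this fill-in. Since $U$ is already a clique in $H$, the only non-edges inside $C\cup U$ are those between $C$ and $U$ together with those inside $C$. By the construction, each vertex $v$ of $G$ is nonadjacent in $H$ to precisely the $n^2$ vertices introduced for $v$, so the number of missing edges between $C$ and $U$ equals exactly $|C|\cdot n^2 = \tau(G) n^2$, while the number of missing edges inside $C$ is at most $\binom{|C|}{2}\le\binom{n}{2}<n^2$. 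Hence this fill-in has size strictly less than $(\tau(G)+1)n^2$, and therefore $\opt{H} < (\tau(G)+1)n^2$.

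For the lower bound $\opt{H}\ge\tau(G) n^2$, I would fix an arbitrary fill-in $E_+$ of $H$ and let $C$ be the set of vertices of $G$ that are full with respect to $E_+$. By Proposition~\ref{lem:lower-bound}, $C$ is a vertex cover of $G$, so $|C|\ge\tau(G)$. For each full vertex $v$, all $n^2$ missing edges between $v$ and $U$ (one to each of the $n^2$ vertices introduced for $v$) belong to $E_+$, by the definition of full. These $|C|$ groups of $n^2$ edges are pairwise disjoint, since every edge in the group of $v$ has $v$ as its unique endpoint in $V(G)$, so groups of distinct full vertices cannot share an edge. Summing over $v\in C$ yields $|E_+|\ge |C|\cdot n^2\ge\tau(G) n^2$, and since $E_+$ was arbitrary, $\opt{H}\ge\tau(G) n^2$.

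Both directions are essentially accounting; the only steps that require a moment's care are checking that the extra edges inside $C$ fit within the $n^2$ slack in the first part (which follows from $|C|\le n$) and verifying the disjointness of the missing-edge groups in the second part. The latter is the single place where the quantitative design of the reduction — a private block of $n^2$ vertices attached to each vertex of $G$ — is actually exploited, and it is what rules out any double counting; I do not anticipate an obstacle beyond phrasing these observations cleanly.
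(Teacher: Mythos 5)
Your proposal is correct and follows essentially the same route as the paper: the upper bound by completing $C\cup U$ into a clique for a minimum vertex cover $C$ and counting the at most $\tau(G)n^2+\binom{n}{2}$ added edges, and the lower bound by invoking Proposition~\ref{lem:lower-bound} and charging $n^2$ disjoint missing edges to each full vertex. The only difference is that you spell out the disjointness of the edge groups in the lower bound, which the paper leaves implicit.
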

\begin{proof}
  The lower bound follows directly from
  Proposition~\ref{lem:lower-bound}.  For the upper bound, we
  construct a fill-in for $H$ as follows.  Let $C$ be any minimum
  vertex cover of $G$.  We add all edges among $C\cup U$ to make $G$ a
  split graph.  These include $|C| n^2$ edges between $C$ and $U$ and
  ${|C| \choose 2} - |E(G[C])|$ non-edges in $G[C]$. Thus,
  \begin{align*}
    \opt{H} \le |C| n^2 + {|C| \choose 2} - |E(G[C])|
    \le |C| n^2 + {|C| \choose 2}
    = \tau(G) n^2 + { \tau(G) \choose 2}
    < \tau(G) n^2 + {n \choose 2}
    < \big(\tau(G) + 1\big) n^2.
  \end{align*}
  This concludes the proof.
\end{proof}

To see why Lemma~\ref{thm:simple-reduction} implies
Theorem~\ref{thm:np-reduction}, note that $\text{ if } \tau(G) \le
c$, then $\opt{H} \le \big(\tau(G) + 1\big) n^2 - 1 \le (c+1)n^2 - 1$; otherwise, 
$\opt{H} \ge \tau(G) n^2 \ge (c+1)n^2 > (c+1)n^2 - 1$.

\section{The hardness results}\label{sec:no-ptas}
A graph is \emph{$d$-degree-bounded} if every vertex has degree
at most $d$.
Any {$d$-degree-bounded} graph can be trivially colored with $d+1$
colors.  According to Brooks' theorem \cite{brooks-41}, if $d \ge 3$
and the graph does not contain a clique on $d + 1$ vertices, then it
can be colored by $d$ colors, and such a coloring can be found in
linear time.  Our second reduction starts from such a colored
{$d$-degree-bounded} graph.  We introduce a set of vertices for each color
class instead of each vertex, and the size of each set is $b n$ for
some constant $b$ to be specified later.

\begin{reduction}\label{reduction:colored}
  Let $G$ be an $n$-vertex {$d$-degree-bounded} graph that does not contain a
  clique on $d + 1$ vertices.  We find a proper coloring of $G$ with
  $d$ colors.  For each color, introduce a set of $b n$ new vertices,
  and add edges to make them adjacent to all vertices of $G$ that
  receive a different color.  Let $U$ denote all the $b d n$ new
  vertices; add all possible edges to connect $U$ into a clique.
\end{reduction}

An example of Reduction~\ref{reduction:colored} is illustrated in
Figure~\ref{fig:fine-reduction}.  The produced graph $H$ has $(b d +
1) n = O(n)$ vertices.
For any vertex cover $C$ of $G$, adding all edges among $C\cup U$ to
$H$ makes it a split graph.  Thus,
\begin{equation}\label{eq:upper-bond}
  \opt{H} \le \tau(G) \cdot b n + {\tau(G) \choose 2} < b n \tau(G) +
  \frac{1}{2} \tau^2(G).
\end{equation}
Although we use a constant number of sets, the following facts from
Reduction~\ref{reduction:premitive} remain true: (1) each vertex of
$G$ is nonadjacent to one set of new vertices; and (2) for each edge
of $G$, its two vertices are nonadjacent to two different sets of new
vertices.  The second fact is ensured by the proper coloring.  We
define the full vertices in exactly the same way as before: a vertex
$v$ of $G$ is \emph{full} with respect to a fill-in $E_+$ if $E_+$
contains all the missing edges between $v$ and $U$.  It is easy to
verify that Proposition~\ref{lem:lower-bound} remains true for the new
reduction: Actually, a word-by-word copy of the proof works.
\begin{proposition}\label{lem:vertex-cover}
  Let $G$ be an $n$-vertex {$d$-degree-bounded} graph that does not contain a
  clique on $d + 1$ vertices, and let $H$ be the graph obtained from
  $G$ by Reduction~\ref{reduction:colored}.  For any fill-in $E_+$ of
  $H$, the set of vertices that are full with respect to $E_+$ is a
  vertex cover of $G$.
\end{proposition}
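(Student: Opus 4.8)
The plan is to replay, essentially verbatim, the proof of Proposition~\ref{lem:lower-bound}, with the single adjustment that the disjointness of the two relevant sets of new vertices is now supplied by the proper coloring (fact~(2) above) rather than by each vertex of $G$ owning a private set. Write $\widehat H = H + E_+$ and fix an arbitrary edge $uv$ of $G$; the goal is to show that at least one of $u$ and $v$ is full with respect to $E_+$.

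Suppose for contradiction that neither is. In $H$ the vertex $u$ is nonadjacent only to the set $U_u$ of new vertices associated with its own color class (fact~(1)); since $u$ is not full, $E_+$ misses some edge between $u$ and $U_u$, i.e.\ there is $u'\in U_u$ with $uu'\notin E(\widehat H)$. Symmetrically there is $v'\in U_v$ with $vv'\notin E(\widehat H)$, where $U_v$ is the set for the color of $v$. Because $uv\in E(G)$ and the coloring is proper, $u$ and $v$ have different colors, so $U_u$ and $U_v$ are disjoint and therefore $u'\ne v'$.

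I would then verify that $u,v,u',v'$ induce a $C_4$ in $\widehat H$. The four edges $uv$, $vu'$, $u'v'$, $v'u$ all lie in $E(H)\subseteq E(\widehat H)$: $uv$ is an edge of $G$; $u'v'$ lies inside the clique $U$; $vu'\in E(H)$ since $u'\in U_u$ is adjacent to every vertex of $G$ whose color differs from that of $u$, and $v$ has a color different from $u$; and $uv'\in E(H)$ for the symmetric reason. Meanwhile $uu'$ and $vv'$ are non-edges of $\widehat H$ by the choice of $u'$ and $v'$. Hence $u\,v\,u'\,v'$ is an induced $C_4$ in $\widehat H$, contradicting that $\widehat H$ is chordal. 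Thus every edge of $G$ has a full endpoint, so the set of full vertices is a vertex cover of $G$.

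I do not expect any genuine obstacle: the only place where the argument could conceivably fail is the assertion $u'\ne v'$, and this is precisely where the hypothesis that $G$ admits a proper $d$-coloring (equivalently fact~(2), that the two endpoints of an edge of $G$ miss two distinct sets of $U$) enters; in Reduction~\ref{reduction:premitive} this was automatic. Every other line of the earlier proof carries over unchanged.
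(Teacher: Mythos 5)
Your proof is correct and is exactly what the paper intends: the paper simply states that ``a word-by-word copy'' of the proof of Proposition~\ref{lem:lower-bound} works, and you have carried out that copy, correctly identifying that the only point needing new justification is $u'\ne v'$, which the proper coloring supplies. No discrepancy with the paper's approach.
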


\begin{figure}[t]
  \centering
    \includegraphics{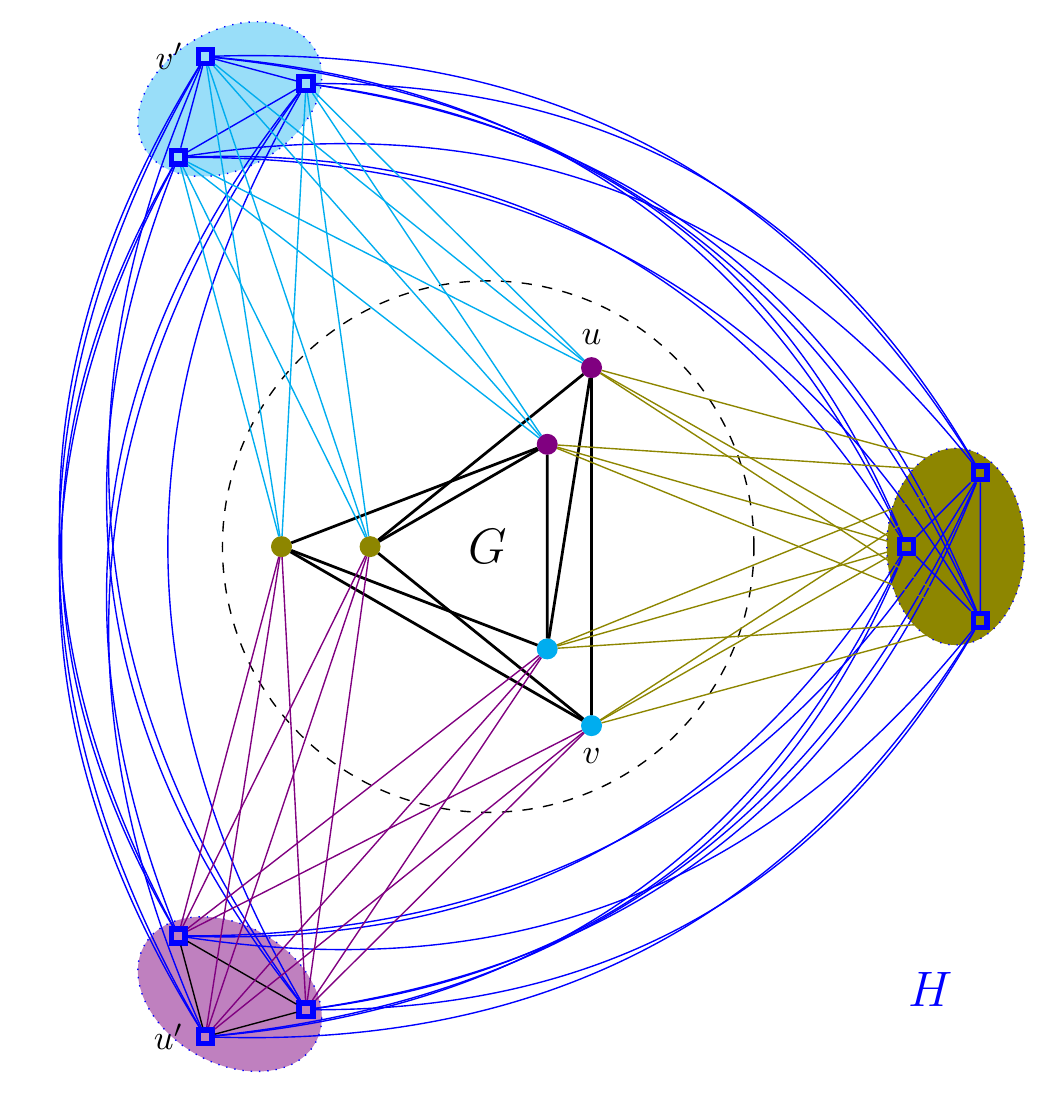} 
\caption{Illustration for Reduction~\ref{reduction:colored}.  Inside
  the dashed circle is a $3$-regular graph $G$, whose six vertices are
  colored with three colors.  Three sets of new vertices (blue
  squares) are added, each for a different color.  For example, the
  set containing $v'$ is for color cyan, and are connected by cyan
  edges to olive and violet vertices of $G$.  Also added are (blue)
  edges among all vertices in $U$.}
  \label{fig:fine-reduction}
\end{figure}

Before proving the main theorems of this paper, let us recall some
simple facts on minimum vertex covers of a {$d$-degree-bounded} graph
$G$ with $d \ge 3$.  Trivially, $\tau(G) < |V(G)|$, while an easy
degree counting tells us that $\tau(G) \ge |V(G)| / (d + 1)$.  If $G$
does contain a clique on $d + 1$ vertices, this clique is necessarily
a component of $G$.  For the vertex cover problem, such a clique
component would not concern us: We take $d$ vertices from it, which is
optimal.  For approximation, if we can get approximation ratio
$\alpha$ for the rest of the graph with all such components removed,
then we can have a ratio for the original graph no worse than
$\alpha$.  In other words, any approximation lower bounds of the
vertex cover problem on {$d$-degree-bounded} graphs hold for
{$d$-degree-bounded} graphs with no ($d + 1$)-cliques.
\begin{lemma}\label{thm:apx-fill-in}
  Let $\epsilon$ be a positive constant and let $d \ge 3$ be an
  integer.  If there is an $f(N, \epsilon)$-time approximation
  algorithm with ratio $1 + \epsilon/3$ for the minimum fill-in
  problem on $N$-vertex graphs, then there is an $O(f(c N, \epsilon) +
  N^2)$-time approximation algorithm with ratio $1 + \epsilon$ for the
  minimum vertex cover problem on $N$-vertex {$d$-degree-bounded} graphs, where
  $c$ is a constant depending on only $d$ and $\epsilon$.
\end{lemma}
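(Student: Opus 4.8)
The plan is to compose Reduction~\ref{reduction:colored} with the postulated fill-in approximation algorithm, choosing the size parameter $b$ of the reduction large enough that the small additive slack in~\eqref{eq:upper-bond} together with the algorithm's multiplicative error $1+\epsilon/3$ still amounts to at most $1+\epsilon$. First I would dispose of the clique components: given an $N$-vertex {$d$-degree-bounded} graph $G$, detect in $O(N^2)$ time all components that are $(d+1)$-cliques, keep $d$ vertices from each (which is optimal for them), and let $G'$ be what remains. Then $G'$ is {$d$-degree-bounded}, has $n\le N$ vertices, and contains no $(d+1)$-clique, so by the remark preceding the lemma a $(1+\epsilon)$-approximation of $\tau(G')$ yields a $(1+\epsilon)$-approximation of $\tau(G)$. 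Fix a positive integer constant $b=b(\epsilon)$ to be pinned down at the end, obtain a $d$-coloring of $G'$ in linear time via Brooks' theorem, and apply Reduction~\ref{reduction:colored} to produce $H$; it has $(bd+1)n\le cN$ vertices with $c:=bd+1$ depending only on $d$ and $\epsilon$, and $O(N^2)$ edges, so $H$ can be written down in $O(N^2)$ time.

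Next I would run the hypothesized $(1+\epsilon/3)$-approximation algorithm on $H$, in time $f(cN,\epsilon)$, obtaining a fill-in $E_+$ with $|E_+|\le(1+\epsilon/3)\opt{H}$. For the upper bound on $\opt{H}$ I use~\eqref{eq:upper-bond} together with the trivial bound $\tau(G')<n$ to get $\opt{H}< bn\,\tau(G')+\tfrac12\tau^2(G')<(b+\tfrac12)\,n\,\tau(G')$. For a matching lower bound on $|E_+|$ in terms of a vertex cover, let $C$ be the set of vertices full with respect to $E_+$; by Proposition~\ref{lem:vertex-cover} it is a vertex cover of $G'$, and since each $v\in C$ forces into $E_+$ the $bn$ missing edges between $v$ and the set of new vertices associated with the colour of $v$, and these edge sets are pairwise disjoint across distinct $v\in C$, we get $|E_+|\ge|C|\,bn$. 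The set $C$ can be read off from $E_+$ in $O(N^2)$ time.

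Combining the two bounds, $|C|\le |E_+|/(bn)\le(1+\epsilon/3)(b+\tfrac12)\tau(G')/b=(1+\epsilon/3)\bigl(1+\tfrac{1}{2b}\bigr)\tau(G')$. Taking $b$ to be a sufficiently large constant depending only on $\epsilon$ — any integer $b\ge(3+\epsilon)/(4\epsilon)$ works — makes $(1+\epsilon/3)(1+\tfrac{1}{2b})\le 1+\epsilon$, so $C$ together with the vertices kept from the clique components is a $(1+\epsilon)$-approximate minimum vertex cover of $G$, produced in total time $O(f(cN,\epsilon)+N^2)$.

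The only genuine delicacy is this last bookkeeping of error. The reduction itself opens a multiplicative gap of $1+\Theta(1/b)$ between the natural lower and upper bounds $bn\,\tau(G')$ and $(b+\tfrac12)n\,\tau(G')$ on $\opt{H}$ (the $\tfrac12\tau^2$ term of~\eqref{eq:upper-bond} becomes $\tfrac12 n\tau$ in the worst case, i.e. on graphs where $\tau(G')=\Theta(n)$, which is exactly the regime forced by bounded degree), so $b$ must be taken large relative to $1/\epsilon$ to keep the composed ratio below $1+\epsilon$; everything else is a routine combination of~\eqref{eq:upper-bond}, Proposition~\ref{lem:vertex-cover}, and the clique-component remark.
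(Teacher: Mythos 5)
Your proposal is correct and follows essentially the same route as the paper: apply Reduction~\ref{reduction:colored} with $b=\Theta(1/\epsilon)$, run the fill-in approximation on $H$, return the full vertices $C$, and combine $|E_+|\ge|C|bn$ with the upper bound~\eqref{eq:upper-bond} and $\tau<n$ to bound the ratio. The only differences are cosmetic — you make the clique-component preprocessing explicit (the paper handles it in a remark before the lemma) and you derive the minimal admissible $b\ge(3+\epsilon)/(4\epsilon)$ where the paper simply fixes $b=\lceil\epsilon^{-1}\rceil$.
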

\begin{proof}
  Let $\alpha = 1 + \epsilon/3$.  We use the $\alpha$-approximation
  algorithm for the minimum fill-in problem to construct a ($1 +
  \epsilon$)-approximation algorithm for the vertex cover problem on
  {$d$-degree-bounded} graphs as follows.  Since it is easy to find a
  $2$-approximation for the vertex cover problem, it suffices to
  consider $\epsilon < 1$.

  Let $G$ be a {$d$-degree-bounded} graph on $n$ vertices; we may
  assume without loss of generality that $G$ contains no clique on $d
  + 1$ vertices.  We apply Reduction~\ref{reduction:colored} to $G$
  with $b := \lceil {\epsilon}^{-1}\rceil$, and let $H$ be the
  obtained graph.  Then we use the $\alpha$-approximation algorithm to
  find a fill-in $E_+$ of $H$, with $|E_+| \le \alpha\cdot \opt{H}$.
  Our algorithm for vertex cover simply returns the set $C$ of full
  vertices: By Proposition~\ref{lem:vertex-cover}, it is a vertex
  cover of $G$.

  We consider first the approximation ratio of this algorithm.  Since
  $E_+$ contains all the missing edges between $C$ and $U$, it follows
  ${|E_+|} \ge |C| \cdot {b n}$.  On the other hand, $|E_+| \le
  \alpha\cdot \opt{H}$.  Combining them with \eqref{eq:upper-bond}, we
  have
  \[
  |C| \le \frac{|E_+|}{b n} \le \frac{\alpha \cdot \opt{H}}{b n} <
  \alpha \frac{b n \tau(G) + 0.5 \tau^2(G)}{b n} = \alpha \tau(G)
  \Big( 1 + \frac{\tau(G)}{2 b n}\Big).
  \]
  By $b = \lceil {\epsilon}^{-1}\rceil \ge {\epsilon}^{-1}$ and the
  fact $\tau(G) < n$, we can conclude
  \[
  \frac{|C|}{\tau(G)} < \alpha \Big( 1 + \frac{\tau(G)}{2 b n}\Big)
  \le \alpha \big( 1 + \frac{\epsilon}{2} \big) = \big( 1 +
  \frac{\epsilon}{3} \big) \big( 1 + \frac{\epsilon}{2} \big),
  \]
  which is smaller than $1 + \epsilon$ for any $\epsilon < 1$.

  We now calculate the running time of this algorithm.  Note that
  $|V(H)| = b d n + n$.  Let $c = (\epsilon^{-1} + 1) d + 1 > b d +
  1$.  The construction takes $O((cn)^2) = O(n^2)$ time; the
  approximation algorithm for $H$ takes $f(c n, \epsilon)$ time; and
  it takes another $O(n^2)$ time to find and return the vertex cover
  $C$.  Thus, the total running time is $O(f(c n, \epsilon) + n^2)$.
  This concludes the proof.
\end{proof}

\begin{lemma}\label{thm:apx-chordal-completion}
  Let $\epsilon$ be a positive constant and let $d \ge 3$ be an
  integer.  If there is an $f(N, \epsilon)$-time approximation
  algorithm with ratio $1 + \epsilon^2/(10 d^3)$ for the chordal
  completion problem on $N$-vertex graphs, then there is an $O(f(c N,
  \epsilon) + N^2)$-time approximation algorithm with ratio $1 +
  \epsilon$ for the minimum vertex cover problem on $N$-vertex
  {$d$-degree-bounded} graphs, where $c$ is a constant depending on
  only $d$ and $\epsilon$.
\end{lemma}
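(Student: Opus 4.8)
The plan is to follow the proof of Lemma~\ref{thm:apx-fill-in}, with one essential new twist. For chordal completion the objective function carries the large base term $|E(H)|$, so a $(1+\gamma)$-approximation only yields the weaker estimate $|E_+| \le \opt{H} + \gamma\bigl(|E(H)| + \opt{H}\bigr)$ on the number of added edges, and the ``dilution'' caused by $|E(H)|$ must be kept under control. As before it suffices to handle $\epsilon < 1$ (otherwise a trivial $2$-approximation does) and to assume that $G$ has no clique on $d+1$ vertices and no isolated vertex, so that $\tau(G) \ge n/(d+1)$. I would apply Reduction~\ref{reduction:colored} with $b := \lceil\epsilon^{-1}\rceil$, run the assumed approximation algorithm for chordal completion (ratio $1+\gamma$ with $\gamma := \epsilon^2/(10 d^3)$) on the resulting graph $H$ to obtain a fill-in $E_+$, and return the set $C$ of full vertices, which is a vertex cover of $G$ by Proposition~\ref{lem:vertex-cover}.

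For the ratio, $E_+$ contains all missing edges between $C$ and $U$, so $|C| \le |E_+|/(bn)$, and I would bound the two contributions to $|E_+|$ separately. The term $\opt{H}$ is controlled by \eqref{eq:upper-bond} together with $\tau(G) < n$, giving $\opt{H} < bn\,\tau(G) + \tfrac12\tau^2(G) < \bigl(b + \tfrac12\bigr)\,n\,\tau(G)$. The term $|E(H)|$ is dominated by the clique on $U$: summing the $\binom{bdn}{2}$ edges inside $U$, the $(d-1)bn^2$ edges between $V(G)$ and $U$, and the at most $dn/2$ edges of $G$ gives $|E(H)| \le b^2 d^2 n^2$ (the estimate reduces to $2b+1 \le b^2 d$, which holds for $b\ge 1$, $d\ge 3$). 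Crucially, $\tau(G) \ge n/(d+1)$ lets me rewrite this bound in terms of the ``useful'' quantity: $|E(H)| \le b^2 d^2 n^2 \le (d+1)\,b^2 d^2 \, n\,\tau(G)$.

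Combining, $|C| \le \bigl((1+\gamma)\opt{H} + \gamma|E(H)|\bigr)/(bn)$, which after dividing by $\tau(G)$ gives
\[
  \frac{|C|}{\tau(G)} < (1+\gamma)\Bigl(1 + \frac{1}{2b}\Bigr) + \gamma\,(d+1)\,b\,d^2 .
\]
Here $\tfrac{1}{2b} \le \tfrac{\epsilon}{2}$ and $b \le 2\epsilon^{-1}$ for $\epsilon<1$, and $(d+1)d^2 \le 2d^3$, so the last term is at most $4\gamma d^3/\epsilon = \tfrac{2}{5}\epsilon$ for the chosen $\gamma$; and $\gamma \le \epsilon^2/270$ makes $(1+\gamma)\bigl(1+\tfrac{1}{2b}\bigr) < 1 + 0.51\epsilon$. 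Hence $|C|/\tau(G) < 1+\epsilon$. For the running time, $|V(H)| = (bd+1)n < cn$ with $c := (\epsilon^{-1}+1)d+1$; the construction and the extraction of $C$ take $O((cn)^2)=O(n^2)$ time and the approximation call takes $f(cn,\epsilon)$ time, for a total of $O(f(cn,\epsilon)+n^2)$.

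I expect the only genuine work to be this constant-chasing; the conceptual content is simply that, unlike in Lemma~\ref{thm:apx-fill-in}, the base count $|E(H)| = \Theta(b^2 d^2 n^2)$ sits in the objective, which costs a factor of roughly $\epsilon/d^3$ in the tolerable approximation error---one $\epsilon$ to cancel $b \approx \epsilon^{-1}$, a $d^2$ from the density of the clique on $U$, and one further $d$ from $n \le (d+1)\tau(G)$---which is exactly why the hypothesis asks for ratio $1+\epsilon^2/(10 d^3)$.
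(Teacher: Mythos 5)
Your proposal is correct and follows essentially the same route as the paper: the same reduction with $b=\lceil\epsilon^{-1}\rceil$, the same split of $|E_+|$ into the $\gamma|E(H)|$ and $(1+\gamma)\opt{H}$ contributions, the same bound $|E(H)|\le b^2d^2n^2$, and the same use of $\tau(G)=\Omega(n/d)$ to absorb the base term; only the bookkeeping of constants differs slightly (the paper uses $\tau(G)>n/(2d)$ where you use $\tau(G)\ge n/(d+1)$), and both land safely below $1+\epsilon$.
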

\begin{proof}
  Let $\alpha = 1 + \epsilon^2/(10 d^3)$.  We use the
  $\alpha$-approximation algorithm for the minimum chordal completion
  problem to construct a ($1 + \epsilon$)-approximation algorithm for
  the vertex cover problem on {$d$-degree-bounded} graphs as follows.
  Since it is easy to find a $2$-approximation for the vertex cover
  problem, it suffices to consider $\epsilon < 1$.

  Let $G$ be a {$d$-degree-bounded} graph on $n$ vertices; we may
  assume without loss of generality that $G$ contains no clique on $d
  + 1$ vertices.  We apply Reduction~\ref{reduction:colored} to $G$
  with $b := \lceil {\epsilon}^{-1}\rceil$, and let $H$ be the
  obtained graph.  Then we use the $\alpha$-approximation algorithm to
  find a chordal completion of $H$; let $\widehat H$ be the obtained
  chordal supergraph.  Recall that the minimum number of edges a
  chordal supergraph of $H$ can have is $|E(H)| + \opt{H}$, thus
  $|E(\widehat H)| \le \alpha \big( |E(H)| + \opt{H} \big)$.  Denote
  by $E_+$ the fill-in produced by this algorithm, i.e., $E_+ =
  E(\widehat H)\setminus E(H)$.  Our algorithm for vertex cover simply
  returns the set $C$ of full vertices: By
  Proposition~\ref{lem:vertex-cover}, it is a vertex cover of $G$.

  We consider first the approximation ratio of this algorithm.  Apart
  from edges of $G$, the constructed graph $H$ contains ${|U| \choose
    2}$ edges among $U$ and $n \cdot (|U| - b n)$ edges between $V(G)$
  and $U$.  Thus,
  \begin{align*}
    |E(H)| &= |E(G)| + {b d n \choose 2} + n (b d n - b n)
    \\
    &\le \frac{d n}{2} + \frac{b^2 d^2 n^2 - b d n}{2} + b (d - 1) n^2
    \\
    &= \frac{d n - b d n}{2} + \frac{b^2 d^2 n^2 + 2 b (d - 1) n^2}{2}
    \\
    &< b^2 d^2 n^2,
  \end{align*}
  where the last inequality follows from that $b \ge \epsilon^{-1} >
  1$ and $d \ge 3$.  Since $E_+$ contains all the missing edges
  between $C$ and $U$, it follows ${|E_+|} \ge |C| \cdot {b n}$.
  Combining \eqref{eq:upper-bond}, we have
  \begin{align*}
    |C| \le \frac{|E_+|}{b n} & = \frac{|E(\widehat H)| - |E(H)|}{b n}
    \\
    & \le \frac{\alpha \big( |E(H)| + \opt{H} \big) - |E(H)|}{b n} &
    \\
    &= \frac{( \alpha - 1) |E(H)| + \alpha \cdot \opt{H} }{b n}
    \\
    &< \frac{( \alpha - 1) b^2 d^2 n^2 + \alpha \big( b n \tau(G) +
      {\tau^2(G)}/2 \big) }{b n}
    \\
    &= ({\alpha - 1}) b d^2 n + \alpha \tau(G) +
    \frac{{\alpha}\tau^2(G) }{2 b n}
    \\
    &< ({\alpha - 1}) b d^2 n + \alpha \tau(G) +
    \frac{{\alpha}\tau(G) }{2 b},
  \end{align*}
  where the last inequality follows from that $ \tau(G) < {n}$.  And
  then
  \begin{align*}
    \frac{|C|}{\tau(G)} &< \frac{({\alpha - 1}) b d^2 n}{\tau(G)} +
    \alpha + \frac{{\alpha} }{2 b}
    \\
    &< 2(\alpha - 1) b d^3 + \alpha + \frac{\alpha}{2 b} & \big(
    \tau(G) > \frac{n}{2 d}\big)
    \\
    &\le 2(\alpha - 1) b d^3 + \alpha + \frac{\alpha \epsilon}{2} &
    \big(b \ge \frac{1}{\epsilon}\big)
    \\
    &= 2(\alpha - 1) b d^3 + (\alpha - 1)\big(1 +
    \frac{\epsilon}{2}\big) + 1 + \frac{\epsilon}{2}
    \\
    &= (\alpha - 1) \big(2 b d^3 + 1 + \frac{\epsilon}{2} \big) + 1 +
    \frac{\epsilon}{2}
    \\
    &< ({\alpha - 1})\big( \frac{4 d^3}{\epsilon} + 1 +
    \frac{\epsilon}{2} \big) + 1 + \frac{\epsilon}{2} & \big(b <
    \frac{2}{\epsilon}\big)
    \\
    &< ({\alpha - 1})\frac{5 d^3}{\epsilon} + 1 + \frac{\epsilon}{2} &
    \big(d \ge 3\; \&\; \epsilon < {1}\big)
    \\
    &= \frac{\epsilon^2}{10 d^3} \frac{5 d^3}{\epsilon} + 1 +
    \frac{\epsilon}{2}
    \\
    &= 1 + {\epsilon}.
  \end{align*}

  We now calculate the running time of this algorithm.  Note that
  $|V(H)| = b d n + n$.  Let $c = (\epsilon^{-1} + 1) d + 1 \ge b d +
  1$.  The construction takes $O((cn)^2) = O(n^2)$ time; the
  approximation algorithm for $H$ takes $f(c n, \epsilon)$ time; and
  it takes another $O(n^2)$ time to find and return the vertex cover
  $C$.  Thus, the total running time is $O(f(c n, \epsilon) + n^2)$.
  This concludes the proof.
\end{proof}

Together with the hardness results of the vertex cover problem on
degree-bounded graphs, it is now quite straightforward to derive the
main results announced in Section~\ref{sec:intro}.
\begin{proof}[Proof of Theorem~\ref{thm:no-ptas}]
  Papadimitriou and Yannakakis~\cite{papadimitriou-91-apx-hardness}
  and Alimonti and Kann~\cite{alimonti-00-apx-hard-cubic-graphs}
  showed that vertex cover is APX-hard on $d$-degree-bounded graphs
  for all $d \ge 3$.  If there is a \ptas{} for the minimum fill-in
  problem or the chordal completion problem, then according to
  Lemma~\ref{thm:apx-fill-in} and
  Lemma~\ref{thm:apx-chordal-completion} respectively, we can use it
  to derive a \ptas{} for the vertex cover problem on
  $3$-degree-bounded graphs, which implies P $=$ NP.
\end{proof}

The following theorem has been essentially observed and used by
Marx~\cite[Lemma~2.5]{marx-07-approximation-optimality}.\footnote{His
  proof, which is omitted in the conference version, uses the
  reduction with bounded-degree expander graphs of Papadimitriou and
  Yannakakis~\cite{papadimitriou-91-apx-hardness} and the
  almost-linear size PCP of Dinur~\cite{dinur-07-pcp} (personal
  communication).} A simpler proof was given by Bonnet et
al.~\cite[Proposition 3, Theorem
9]{bonnet-15-subexponential-inapproximability}.
\begin{lemma}[\cite{marx-07-approximation-optimality,
    bonnet-15-subexponential-inapproximability}]
  \label{thm:bounded-degree-vc}
  Assuming ETH, there exist an integer d and a positive constant
  $\epsilon$ such that there exists no ($1+\epsilon$)-approximation
  algorithm for vertex cover on $d$-degree-bounded graphs in time
  $2^{O(n^{1-\delta})}$ for any positive constant $\delta$.
\end{lemma}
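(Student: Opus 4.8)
The plan is to derive this from ETH by composing three ingredients while carefully tracking instance sizes: the Sparsification Lemma, a near-linear-size PCP, and the expander-based degree reduction of Papadimitriou and Yannakakis~\cite{papadimitriou-91-apx-hardness}. The goal is to turn an $n$-variable \textsc{3sat} instance into a $d$-degree-bounded graph on $N = n\cdot\mathrm{polylog}(n)$ vertices on which approximating vertex cover within some fixed ratio $1+\epsilon$ is as hard as deciding satisfiability.

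First I would apply the Sparsification Lemma, so that it suffices to handle \textsc{3sat} instances with $m = O(n)$ clauses; ETH then rules out any $2^{o(n)}$-time algorithm for these. Next I would invoke a PCP theorem of near-linear size --- concretely Dinur's~\cite{dinur-07-pcp} --- to produce, in polynomial time and with only a polylogarithmic blow-up, a gap instance of a constraint satisfaction problem of constant arity over a constant alphabet and with bounded variable occurrence, such that distinguishing the satisfiable case from the at-most-$(1-\gamma)$-satisfiable case (for an absolute constant $\gamma>0$) is as hard as the original decision. Finally I would run the expander-based reduction of Papadimitriou and Yannakakis together with the standard gadget reduction to vertex cover: a constant-size gadget per constraint, with the copies of each variable glued along a constant-degree expander so that the Expander Mixing Lemma forces approximate consistency and preserves the gap up to a constant factor. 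The output is an integer $d$, a constant $\epsilon>0$, and an $N$-vertex graph $H$ of maximum degree $d$ with $N = O(n\cdot\mathrm{polylog}(n))$, on which any $(1+\epsilon)$-approximate vertex cover decides the gap problem.

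To finish, suppose some algorithm $(1+\epsilon)$-approximates vertex cover on $d$-degree-bounded graphs in time $2^{O(N^{1-\delta})}$ for a positive constant $\delta$. Running it on $H$ would solve the gap problem, hence the sparse \textsc{3sat} instance, in time $2^{O((n\,\mathrm{polylog}\,n)^{1-\delta})} = 2^{O(n^{1-\delta}\,\mathrm{polylog}\,n)} = 2^{o(n)}$, contradicting ETH. This is precisely the argument of Marx~\cite{marx-07-approximation-optimality}, streamlined by Bonnet et al.~\cite{bonnet-15-subexponential-inapproximability}.

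The hard part --- and the reason a naive chaining fails --- is the size accounting in the middle: the classical PCP theorem gives only a polynomial-size reduction, which would yield a much weaker bound of the form $2^{O(n^{c(1-\delta)})}$; one genuinely needs a quasi-linear-size PCP together with gadget steps (bounding arity, alphabet, and finally degree via expanders) that each blow the instance up by at most a constant or polylogarithmic factor. Once every link in the chain is near-linear, the remaining work --- fixing the gap constant, bounding $d$, and the final exponent arithmetic --- is routine.
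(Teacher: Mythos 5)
Your proposal is correct and follows exactly the route the paper attributes to Marx in its footnote: the Sparsification Lemma, Dinur's almost-linear-size PCP, and the expander-based degree reduction of Papadimitriou and Yannakakis, with the size accounting showing that $(n\,\mathrm{polylog}\,n)^{1-\delta}=o(n)$ for any constant $\delta>0$. The paper itself does not reprove this lemma but only cites Marx and Bonnet et al.\ for it, so there is nothing further to compare; your sketch matches the cited proof strategy.
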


\begin{proof}[Proof of Theorem~\ref{thm:main-2}]
  Suppose that there is a $2^{O(n^{1 - \delta})}$-time algorithm for
  some positive constant $\delta < 1$ that can approximate the minimum
  fill-in problem within ratio $1 + \epsilon$ for any positive
  $\epsilon$.  Then for any integer $d \ge 3$, we can use
  Lemma~\ref{thm:apx-fill-in} to derive a $2^{O(n^{1 - \delta})}$-time
  ($1 + 3\epsilon$)-approximation algorithm for the vertex cover
  problem on $d$-degree-bounded graphs.  Since $\epsilon$ can be made
  arbitrarily small, together with Lemma~\ref{thm:bounded-degree-vc},
  the algorithm refutes \textsc{eth}.  A similar argument works for
  the chordal completion problem.
\end{proof}

\section{Concluding remarks}\label{sec:remarks}
We have shown the first inapproximability result and almost tight
lower bound for parameterized algorithms and exact algorithms for the
minimum fill-in problem.  We also get the first inapproximability
result for the chordal completion problem under the assumption P $\ne$
NP.  All these results are consequences of our new reduction from the
vertex cover problem to the minimum fill-in problem.

It is easy to verify that our reductions work for the completion
problem to any graph class that forbids $C_4$ and contains all split
graphs, e.g., split graphs, $C_4$-free graphs, and even-hole-free
graphs (i.e., $\{C_{2\ell} : \ell \ge 2\}$-free graphs).\footnote{We
  omit the definitions of these graph classes that are nonessential
  for our results and the following discussion.  The reader unfamiliar
  with these notations is referred to
  \url{http://www.graphclasses.org/} for details.}
On the one hand, Proposition~\ref{lem:lower-bound} applies to all
completion to graph classes that forbid $C_4$.  On the other hand, the
solution obtained by connecting $C\cup U$ into a clique is a solution
for any completion problem to a superclass of the class of split
graphs.  In other words, our reductions work for an objective graph
class as long as the set $\cal F$ of its forbidden induced subgraphs
satisfies the following conditions: (1) $C_4\in \cal F$; and (2) any
other $F\in \cal F$ contains $2 K_2$ or $C_5$.

Yannakakis' reduction is far more powerful in this sense.  It directly
applies to interval graphs, unit interval graphs, and strongly chordal
graph, while a slight modification works for trivially perfect graphs
and threshold graphs (see, e.g., Bliznets et
al.~\cite{bliznets-16-lower-bounds} for details).  Interval graphs are
the intersection graphs of intervals on the real line.  The completion
problem to interval graphs is another graph completion problem that
finds important application in sparse matrix computation.
Tarjan~\cite{tarjan-75-graph-theory-and-gaussian-elimination} showed
that it is equivalent to the profile minimization problem.  Many
algorithms have been developed in literature, e.g.,
\cite{ravi-91-approximate-interval-completion,
  villanger-09-interval-completion}.  Since all previous hardness
results use Yannakakis' reduction or some variations, the status of
hardness results on these problems are the same.

We propose here a possible remedy to make our reductions work for
interval graphs.  It needs a stronger claim on coloring subcubic
($3$-degree-bounded) graphs containing no $K_4$---namely, we want to
color it in a way that there is a maximum independent set $X$
receiving only two colors.  If we use such a colored graph in
Reduction~\ref{reduction:colored}, then to get the split graph, we can
also add all edges between the independent set and $V(G)\setminus X$,
which is minimum vertex cover of $G$.  The resulting graph would then
be an interval graph.  However, whether this approach works relies on
two questions that we have no answer: (1) is it true that every cubic
graph containing no $K_4$ admits a $3$-coloring that uses at most two
colors for some maximum independent set of this graph?  and (2) if the
answer to the first question is yes, can such a coloring be found in
polynomial time?  Note that, however, this cannot be generalized to
$d$-degree-bounded graphs, because they would imply a
$2$-approximation algorithm for the maximum independent set problem on
$d$-degree-bounded graphs: by taking the vertices from a largest color
class.

The proof of Yannakakis is very influential for another reason.  He
built the correlation between linear arrangement and minimum fill-in
and related problems, including treewidth (recall that the treewidth
of a graph $G$ can be defined as one less than the size of a maximum
clique of a chordal supergraph of $G$ with the smallest clique
number), interval completion, and pathwidth.  Indeed, Yannakakis
defined the chain graphs to capture the ordering property in a graph
(he has used chain graphs in early work
\cite{yannakakis-81-bipartite-node-deletion} without giving a name).
The linear arrangement problem is a special case of the general family
of graph layout problems, which ask for an ordering of the vertices of
a graph to minimize/maximize some measures.  Similar idea had actually
been used by Kashiwabara and
Fujisawa~\cite{kashiwabara-79-interval-completion,
  kashiwabara-79-pathwidth}.  All hitherto known reductions on related
problems, including \cite{arnborg-87-complexity-fill-in,
  bliznets-16-lower-bounds, wu-14-inapproximability--treewidth},
followed this idea, and used graph layout problems as the source
problems.  An important benefit of this approach is that they can be
(usually in an effortless way) applied to related problems on (proper)
interval graphs.  However, these reductions usually explode the graphs
too much, and their approximation hardness has not been settled.

Another open problem is whether our reduction can be adapted to the
treewidth problem.  Similar as all previous reductions, the graphs
produced by our reductions are very dense.  Therefore, none of them
applies to sparse graphs and planar graphs.  Indeed, the complexity of
treewidth on planar graphs is rather a famous open problem.  As a
final remark, a planar graph has treewidth $O(\sqrt{n})$ and a fill-in
of size $o(n^2)$ \cite{chung-94-planar-fill-in}, while a the treewidth
of a cubic graph can be $\Theta( n)$ and its minimum fill-in can be
$\Theta(n^2)$~\cite{agrawal-93-minimum-fill-in}.

\paragraph{Acknowledgment.}  We are indebted to Marek Cygan and
Bingkai Lin for careful reading of this manuscript and helpful comments.

{
  \small
  \bibliographystyle{plainurl}
  \bibliography{../../journal,../../main}
}
\end{document}